\newcommand{\R}{\ensuremath{\mathbb R}}
\newcommand{\PB}{\ensuremath{\mathbf P}}
\newcommand{\Rmnum}[1]{\expandafter\@slowromancap\romannumeral #1@}
\def\iid{i.i.d.}
\newtheorem{theorem}{Theorem}[section]
\theoremstyle{definition}
\theoremstyle{remark}
\title{Speeding up Permutation Testing in Neuroimaging \thanks{Hinrichs and Ithapu are joint first authors and contributed equally to this work.}}
\author{ Chris Hinrichs$^\dagger$ \hspace{1mm} Vamsi K. Ithapu$^\dagger$ \hspace{1mm} Qinyuan Sun$^\dagger$ \hspace{1mm} Sterling C. Johnson$^{\mathsection,\dagger}$\hspace{1mm} Vikas Singh$^\dagger$\\ 
  $^\mathsection$William S. Middleton Memorial VA Hospital \hspace{2mm} $^\dagger$University of Wisconsin--Madison\\
  {\footnotesize \texttt{\{hinrichs,vamsi\}@cs.wisc.edu} \hspace{2mm} \texttt{\{qsun28\}@wisc.edu}} \\ 
     {\footnotesize \texttt{\{scj\}@medicine.wisc.edu} \hspace{2mm}  \texttt{\{vsingh\}@biostat.wisc.edu}} \\
  {\footnotesize \url{http://pages.cs.wisc.edu/~vamsi/pt_fast}}}
\begin{document}

\maketitle

\begin{abstract}
Multiple hypothesis testing is a significant problem in nearly all neuroimaging
studies. In order to correct for this phenomena, we require
a reliable estimate of the Family-Wise Error Rate (FWER).
The well known Bonferroni correction method, while simple to implement, is quite conservative, and 
can substantially under-power a study because it ignores dependencies between test statistics.
Permutation testing, on the other hand, is an exact, non-parametric method of
estimating the FWER for a given $\alpha$-threshold, but for acceptably low
thresholds the computational burden can be prohibitive. In this paper, we show that permutation testing in fact amounts to populating the columns of a very
large matrix $\PB$.
By analyzing the spectrum of this matrix, under certain conditions, we
see that $\PB$ has a low-rank plus a low-variance residual decomposition which makes
it suitable for highly sub--sampled --- on the order of 0.5\% --- matrix completion
methods. 
Based on this observation, we propose a novel permutation testing methodology which offers
a large speedup, without sacrificing the fidelity of the estimated
FWER. Our evaluations on four different neuroimaging datasets show that a
computational speedup factor of roughly $50\times$ can be achieved while
recovering the FWER distribution up to very high accuracy.
Further, we show that the estimated $\alpha$-threshold is also recovered
faithfully, and is stable.
\end{abstract}

\section{Introduction}

Suppose we have completed a placebo-controlled clinical trial of a
promising new drug for a neurodegenerative disorder such as Alzheimer's disease
(AD) on a small sized cohort. The study is
designed such that in addition to assessing improvements in standard cognitive
outcomes (e.g., MMSE), the purported treatment effects will also be assessed
using Neuroimaging data. The rationale here is that, even if the drug does induce variations in cognitive symptoms, 
the brain changes are observable {\em much earlier} in the imaging data. 
On the imaging front, this analysis checks for statistically significant differences between brain images of subjects assigned to the two
trial arms: treatment and placebo.
Alternatively, consider a second scenario where we have completed a neuroimaging
research study of a particular controlled factor, such as genotype, and the interest is to evaluate {\em group-wise} differences in the
brain images: to identify which regions are affected as a function of class
membership. 
In either cases, the standard image processing workflow yields for each subject 
a 3-D image (or voxel-wise ``map''). 
Depending on the image modality acquired, these maps are of cerebral gray matter density,
longitudinal deformation (local growth or contraction) or metabolism.
It is assumed that these maps have been `co-registered' across different subjects 
so that each voxel corresponds to approximately the same anatomical location. \cite{ashburner2000,ashburner2001}.
%

In order to {\em localize} the effect under investigation (i.e., treatment or genotype), we then have to calculate a very large number (say,
$v$) of univariate voxel-wise statistics 
-- typically up to several million voxels. For example, consider group-contrast
$t$-statistics (here we will mainly consider $t$-statistics, however other
test statistics are also applicable, such as the $F$ statistic used in ANOVA testing, Pearson's
correlation as used in functional imaging studies, or the $\chi^2$
test of dependence between variates, so long as 
certain conditions described in Section \ref{sect-main-ideas} are satisfied).
In some voxels, it may turn out that a group-level effect has been indicated,
but it is not clear right away what its true significance level should be, if
any. As one might expect, given the number of hypotheses tests $v$, multiple testing
issues in this setting are quite severe, making it difficult to assess the true
Family-Wise Type I Error Rate (FWER) \cite{westfall1993}. If we were to address
this issue via Bonferroni correction \cite{bland1995}, the enormous number of
separate tests implies that certain weaker signals will almost certainly never
be detected, even if they are real. This directly affects studies of
neurodegenerative disorders in which atrophy proceeds at a very slow rate and
the therapeutic effects of a drug is likely to be mild to moderate anyway. 
This is a critical bottleneck which makes localizing real, albeit slight,
short-term treatment effects problematic. Already, this restriction will prevent
us from using a smaller sized study (fewer subjects), increasing the cost of
pharmaceutical research. In the worst case, an otherwise real treatment effect
of a drug may not survive correction, and the trial may be deemed a failure.

{\bf Bonferroni versus true FWER threshold.} Observe that theoretically, there
{\em is} a case in which the Bonferroni corrected threshold is close to
the true FWER threshold: when point-wise statistics are \iid ~If so, then 
the extremely low Bonferroni corrected $\alpha$-threshold crossings effectively become
mutually exclusive, which makes the Union Bound (on which Bonferroni correction
is based) nearly tight. However, when variables are highly {\em dependent} -- and indeed 
even without smoothing there are many sources of strong non-Gaussian dependencies between voxels, the true FWER threshold can be much
more relaxed, and it is precisely this phenomenon which drives the search for
alternatives to Bonferroni correction. Thus, many methods have been developed to
more accurately and efficiently estimate or approximate the FWER \cite{li2005,
storey, finner2009, leek2008}, which is a subject of much interest in statistics
\cite{clarke2009}, machine learning \cite{garcia2010}, bioinformatics
\cite{ge2003resampling}, and neuroimaging \cite{nichols2003}.

{\bf Permutation testing.} A commonly used method of directly and non-parametrically estimating the
FWER is Permutation testing \cite{nichols2003,singh2003}, which is a method of
sampling from the Global (i.e., Family-Wise) Null distribution.
Permutation testing ensures that any relevant dependencies present in the data carry
through to the test statistics, giving an unbiased estimator of the FWER.
If we want to choose a threshold sufficient to exclude {\em
all} spurious results with probability $1 - \alpha$, we can construct a
histogram of sample maxima taken from permutation samples, and choose a
threshold giving the $1 - \alpha/2$ quantile. Unfortunately, reliable FWER estimates
derived via permutation testing come at excessive (and often infeasible) computational cost --
often tens of thousands or even millions of permutation samples are required,
each of which requires a complete pass over the entire data set. This step alone
can run from a few days up to many weeks and even longer \cite{pantazis, gaonkar2013analytic}.

Observe that the very same dependencies between voxels, that forced the usage of permutation testing, indicate that 
the overwhelming majority of work in computing so many highly correlated Null statistics is redundant.
Note that regardless of their description, strong dependencies of almost any kind 
will tend to concentrate most of their co-variation into a low-rank subspace,
leaving a high-rank, low-variance residual \cite{li2005}. In fact, for Genome
wide Association studies (GWAS), many strategies calculate the `effective
number' ($M_{\rm eff}$) of independent tests corresponding to the rank of this
subspace \cite{cheverud2001,li2005}.
This paper is based on the observation that such a low-rank structure must also
appear in permutation test samples. 
Using ideas from online low-rank matrix completion \cite{he2012} we can
sample a few of the Null statistics and reconstruct the remainder as long as we
properly account for the residual. This allows us to sub-sample at {\em
extremely low rates}, generally $<1\%$. The {\bf contribution} of our work is to
significantly speed up permutation testing in neuroimaging, delivering running
time improvements of up to $50 \times$. 
In other words, our algorithm does the same job as
permutation testing, but takes anywhere from a few minutes up to a few
hours, rather than days or weeks. Further, based on recent work in random
matrix theory, we provide an analysis which sheds additional light on the use of
matrix completion methods in this context. 
To ensure that our conclusions are not an artifact of a specific dataset, we
present strong empirical evidence via evaluations on four separate neuroimaging
datasets of Alzheimer's disease (AD) and Mild Cognitive Impairment (MCI)
patients as well as cognitively healthy age-matched controls (CN),  showing that
the proposed method can recover highly faithful Global Null distributions, while
offering substantial speedups.

\section{The Proposed Algorithm}
\label{sect-background}
We first cover some basic concepts underlying permutation testing and low rank matrix completion in more detail, before presenting 
our algorithm and the associated analysis.  

\subsection{Permutation testing}

Randomly sampled permutation testing \cite{dwass1957} is a methodology for
drawing samples under the Global (Family-Wise) Null hypothesis. Recall that although
point-wise test statistics have well characterized univariate Null
distributions, the sample maximum usually has no analytic form due to the strong correlations across voxels.
Permutation is particularly desirable in this setting because it is free of any distribution assumption
whatsoever \cite{nichols2003}.
The basic idea of permutation testing is very simple, yet extremely powerful. Suppose
we have a set of labeled high dimensional data points, and a univariate test
statistic which measures some interaction between labeled groups for every dimension (or feature).
If we randomly permute the labels and recalculate each test statistic,
then by construction we get a sample from the Global Null distribution. 
The maximum over all of these statistics for every permutation sample is then used to 
construct a histogram, which therefore is a non-parametric estimate of the distribution of the sample maximum of
Null statistics. For a test statistic derived from
the real labels, the FWER corrected $p$-value is then equal
to the fraction of permutation samples which were {\em more extreme}. Note that all of
the permutation samples can be assembled into a matrix ${\bf P} \in \R^{v\times T}$
where $v$ is the number of comparisons (voxels for images), and $T$ is the number of permutation samples.

There is a drawback to this approach, however. Observe that it is in the nature of random sampling
methods that we get many samples from near the mode(s) of the
distribution of interest, but fewer from the tails.
Hence, to characterize the threshold for a small portion of the tail of this
distribution, we must draw a very large number of samples just so that the estimate converges. 
Thus, if we want an $\alpha = 0.01$ threshold from the Null sample maximum distribution, we
require many thousands of permutation samples --- each requires
randomizing the labels and recalculating all test statistics, a very computationally expensive procedure when $v$ is large. 
To be certain, we would like to ensure an especially low FWER by first setting $\alpha$
very low, {\em and then} getting a very precise estimate of the corresponding
threshold. The smallest possible $p$-value we can derive this way
is $1/T$, so for very low $p$-values, $T$ must be very large.

\subsection{Low-rank Matrix completion}

Low-rank matrix completion \cite{candes2010}
seeks to reconstruct missing entries from a matrix, given only a small
fraction of its entries. The problem is ill-posed unless we assume this matrix
has a low-rank column space. If so, then a much smaller
number of observations, on the order of $r\log(v)$, where $r$ is the column
space's rank, and $v$ is its ambient dimension \cite{candes2010} is
sufficient to recover both an orthogonal basis for the row space as well as 
the expansion coefficients for each column, giving the recovery. By placing an
$\ell_1$-norm penalty on the eigenvalues of the recovered matrix via the nuclear
norm \cite{fazel2004,recht2007} we can ensure that the solution is as low rank as
possible. Alternatively, we can specify a rank $r$ ahead of time, and estimate
an orthogonal basis of that rank by following a gradient along the Grassmannian
manifold \cite{balzano2010,he2012}. Denoting the set of randomly subsampled
entries as $\Omega$, the matrix completion problem is given as, 

{\small \begin{equation}
    \label{eq:matrix-completion}
    \min_{\tilde{\bf P}} \|{\bf P}_\Omega - \tilde{\bf P}_\Omega \|_F^2 \qquad
        \mbox{s.t.} \; \tilde{\bf P} = {\bf UW}; \; {\bf U} ~ \mbox{is orthogonal}
\end{equation} }
\noindent where ${\bf U} \in \R^{v\times r}$ is the low-rank basis of ${\bf P}$, $\Omega$ gives the measured entries, and
${\bf W}$ is the set of expansion coefficients which reconstructs $\tilde{\bf P}$ in
${\bf U}$. Two recent methods operate in an online setting, i.e., where 
rows of ${\bf P}$ arrive one at a time, and both ${\bf U}$ and ${\bf W}$ are updated
accordingly \cite{balzano2010,he2012}.

\subsection{Low rank plus a long tail}
\label{sect-main-ideas}
Real-world data often have a dominant low-rank component. 
While the data may not be {\em exactly} characterized by a low-rank basis,
the residual will not significantly alter the eigen-spectrum of the sample
covariance in such cases. Having
strong correlations is nearly synonymous with having a skewed eigen-spectrum,
because the flatter the eigen-spectrum becomes, the sparser the resulting
covariance matrix tends to be (the ``uncertainty principle'' between low-rank and
sparse matrices \cite{chandrasekaran2011}).
This low-rank structure carries through for purely linear statistics (such as sample means).
However, non-linearities in the test statistic calculation, e.g.,
normalizing by pooled variances, will contribute a long tail of eigenvalues,
and so we require that this long tail
will either decay rapidly, or that it does not overlap with the dominant
eigenvalues. For $t$-statistics, the pooled variances are unlikely to
change very much from one permutation sample to another (barring outliers) --- 
hence we expect that the spectrum of $\PB$ will resemble that of the
data covariance, with the addition of a long, exponentially decaying tail.
More generally, if the non-linearity does not de-correlate the test statistics
too much, it will preserve the low-rank structure.

If this long tail is indeed dominated by the low-rank structure, then its
contribution to $\PB$ can be modeled as a low variance Gaussian \iid ~residual. A
Central Limit argument 
appeals to the number of independent eigenfunctions that contribute to this
residual, and, the orthogonality of eigenfunctions implies that as more of
them meaningfully contribute to each entry in the residual, the more independent
those entries become. In other words, if this long tail begins at a low
magnitude and decays slowly, then we can treat it as a Gaussian \iid ~residual;
and if it decays rapidly, then the residual will perhaps be less Gaussian, but
also more negligible. Thus, our development in the next section
makes no direct assumption about these eigenvalues themselves, but rather that
the residual corresponds to a low-variance \iid~Gaussian random matrix --- 
its contribution to the covariance of test statistics will be Wishart
distributed, and from that we can characterize its eigenvalues.

\subsection{Our Method}
It still remains to model the residual numerically. By 
sub-sampling we can reconstruct the low-rank portion of ${\bf P}$ via matrix
completion, but in order to obtain the desired sample maximum distribution we
must also recover the residual. Exact recovery of the residual is essentially
impossible; fortunately, for our purposes we need only need its effect on the
distribution of the {\em maximum per permutation test}. So, we estimate its
variance, (its mean is zero by assumption,) and then randomly sample
from that distribution to recover the unobserved remainder of the matrix.

A large component in the running time of online subspace
tracking algorithms is spent in updating the basis set ${\bf U}$; yet, once a good
estimate for ${\bf U}$ has been found this becomes superfluous. We therefore divide
the entire process into two steps: training, and recovery. During the training
phase we conduct a small number of fully sampled permutation tests ($100$
permutations in our experiments). From these permutation tests, we estimate ${\bf U}$
using sub-sampled matrix completion methods \cite{balzano2010,he2012}, making
multiple passes over the training set (with fixed sub-sampling rate), until
convergence. In our evaluations, three passes sufficed. Then, we obtain
a distribution of the residual ${\bf S}$ over the entire training set. Next is the
recovery phase, in which we sub-sample a small fraction of the entries of each
successive column $t$, solve for the reconstruction coefficients ${\bf W}(\cdot,t)$
in the basis ${\bf U}$ by least-squares, and then add random residuals using
parameters estimated during training. After that, we proceed exactly as in a
normal permutation testing, to recover the statistics. 

{\em Bias-Variance tradeoff}. By using a very sparse subsampling method, there
is a bias-variance dilemma in estimating ${\bf S}$.
That is, if we use the entire matrix $\PB$ to estimate ${\bf U}$, ${\bf
W}$ and ${\bf S}$, we will obtain reliable estimates of ${\bf S}$. But,
there is an overfitting problem: the least-squares objective used in fitting
${\bf W}(\cdot,t)$ to such a small sample of entries is likely to grossly
underestimate the variance of ${\bf S}$ compared to where we use
the entire matrix; (the sub-sampling problem is not nearly as over-constrained
as for the whole matrix). This sampling artifact reduces the
apparent variance of ${\bf S}$, and induces a bias in the distribution of the sample
maximum, because extreme values are found less frequently. This sampling
artifact has the effect of `shifting' the distribution of the sample maximum
towards 0. We correct for this bias by estimating the amount of the shift during
the training phase, and then shifting the recovered sample max distribution by
this estimated amount.

\section{Analysis}
\label{sec:model}

We now discuss two results which show that as long as the variance of the
residual is below a certain level, we can recover the distribution of
the sample maximum. Recall from \eqref{eq:matrix-completion} that
for low-rank matrix completion methods to be applied we must assume that
the permutation matrix $\PB$ can be decomposed into a low-rank component plus a
high-rank residual matrix ${\bf S}$:
\begin{equation} \label{eq:def}
  \PB = {\bf UW} + {\bf S},
\end{equation}
where ${\bf U}$ is a $v \times r$ orthogonal matrix that spans the $r \ll
\min(v,t)$ -dimensional column subspace of $\PB$, and ${\bf W}$ is the
corresponding coefficient matrix. We can then treat the residual ${\bf S}$ as a
random matrix whose entries are \iid ~zero-mean Gaussian with variance $\sigma^2$. 
We arrive at our first result by analyzing how the low-rank portion of 
$\PB$'s singular spectrum interlaces with the contribution coming from the residual
by treating $\PB$ as a low-rank perturbation of a random matrix. 
If this low-rank perturbation is sufficient to dominate the eigenvalues of
the random matrix, then $\PB$ can be recovered with high fidelity at a low
sampling rate \cite{balzano2010, he2012}. Consequently, we can estimate the
distribution of the maximum as well, as shown by our second result.

The following development relies on the observation that the eigenvalues of
$\PB\PB^T$ are the squared singular values of $\PB$. Thus, rather than analyzing
the singular value spectrum of $\PB$ directly, we can analyze the eigenvalues of
$\PB\PB^T$ using a recent result from \cite{benaych2011}. This is important
because in order to ensure recovery of $\PB$, we require that its singular value
spectrum will approximately retain the shape of ${\bf UW}$'s. More precisely, we
require that for some $0 < \delta < 1$,
\begin{equation} \label{eq:cons}
  |\tilde{\phi_{i}} - \phi_{i}| < \delta\phi_{i} \qquad i=1,\ldots,r ; \qquad
  \tilde{\phi_{i}} < \delta\phi_{r} \qquad i=r+1,\ldots,v
\end{equation}
where $\phi_{i}$ and $\tilde{\phi_{i}}$ are the singular values of ${\bf UW}$
and ${\bf P}$ respectively. (Recall that in this analysis $\PB$ is the
perturbation of ${\bf UW}$.) 
Thm. \ref{thm:eigval}
relates the rate at which eigenvalues are perturbed, $\delta$, 
to the parameterization of ${\bf S}$ in terms of $\sigma^2$.
The theorem's principal assumption also relates $\sigma^2$ inversely with the number
of columns of $\PB,$ which is just the number of trials $t$. Note however that
the process may be split up between several matrices $\PB_i$, and the results
can then be combined. For purposes of applying this result in practice we may
then choose a number of columns $t$ which gives the best bound. Theorem \ref{thm:eigval} also assumes that the number of trials $t$ is greater than the
number of voxels $v$, which is a difficult regime to explore empirically. Thus,
our numerical evaluations cover the case where $t < v$, while Thm
\ref{thm:eigval} covers the case where $t$ is larger.

From the definition of ${\bf P}$ in \eqref{eq:def}, we have,
\begin{equation}\label{eq:exp}
{\bf PP}^T = {\bf UWW}^T{\bf U}^T + {\bf SS}^T + {\bf UWS}^T + {\bf SW}^T{\bf U}^T.
\end{equation}
We first analyze the change in eigenvalue structure of ${\bf SS}^T$ when perturbed by 
${\bf UWW}^T{\bf U}^T$, (which has $r$ non-zero eigenvalues). The influence of
the cross-terms (${\bf UWS}^T$ and ${\bf SW}^T{\bf U}^T$) is addressed later.
Thus, we have the following theorem.

\vspace{3mm}
\begin{theorem}\label{thm:eigval}
Denote that $r$ non-zero eigenvalues of
${\bf Q} = {\bf UWW}^T{\bf U}^T \in \R^{v \times v}$ by 
$\lambda_{1}\geq\lambda_{2}\geq,\ldots,\lambda_{r}>0$; and let 
${\bf S}$ be a $v \times t$ random matrix such that ${\bf S}_{i,j} \sim {\mathcal
N}(0, \sigma^2)$, with unknown $\sigma^2$. As $v,t \to \infty$ such that
$\frac{v}{t} \ll 1$, the eigenvalues $\tilde{\lambda_{i}}$ of the perturbed
matrix ${\bf Q} + {\bf SS}^T$ will satisfy 
\begin{equation} \tag{$\star$} \label{modcons} 
    |\tilde{\lambda_{i}} - \lambda_{i}| < \delta\lambda_{i} \qquad i=1,\ldots,r;
  \qquad \tilde{\lambda_{i}} < \delta\lambda_{r} \qquad i=r+1,\ldots,v
\end{equation} 
for some $0 < \delta < 1$, whenever $\sigma^2 <
\frac{\delta\lambda_{r}}{t}$
\end{theorem}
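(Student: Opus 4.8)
The plan is to treat $\mathbf{Q} + \mathbf{S}\mathbf{S}^T$ as a rank-$r$ perturbation of the random Wishart-type matrix $\mathbf{S}\mathbf{S}^T$ and apply the Benaych-Georges--Nadakuditi result from \cite{benaych2011} on the eigenvalues of low-rank perturbations of random matrices. First I would recall the relevant scaling: when $\mathbf{S}$ is $v \times t$ with i.i.d.\ $\mathcal{N}(0,\sigma^2)$ entries and $v/t \to c \ll 1$, the empirical spectral distribution of $\mathbf{S}\mathbf{S}^T$ converges (after the appropriate normalization) to a Marchenko--Pastur law supported on $[\sigma^2 t(1-\sqrt{c})^2,\; \sigma^2 t(1+\sqrt{c})^2]$; in the regime $v/t \ll 1$ this support collapses towards $\sigma^2 t$, so the top eigenvalue of $\mathbf{S}\mathbf{S}^T$ is $\approx \sigma^2 t (1+\sqrt{v/t})^2 = \sigma^2 t + O(\sigma^2 \sqrt{vt})$, which is $O(\sigma^2 t)$. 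The key structural input is that $\mathbf{Q}$ has exactly $r$ nonzero eigenvalues $\lambda_1 \ge \cdots \ge \lambda_r > 0$, so Weyl's inequalities immediately give, for every $i$,
\begin{equation}
  \lambda_i + \mu_{\min} \le \tilde{\lambda}_i \le \lambda_i + \mu_{\max},
\end{equation}
where $\mu_{\min}, \mu_{\max}$ are the extreme eigenvalues of $\mathbf{S}\mathbf{S}^T$, and for the tail indices $i = r+1,\ldots,v$ one gets $\tilde{\lambda}_i \le \mu_{\max}$ since $\lambda_i = 0$ there.

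Next I would convert these two-sided bounds into the claimed inequalities \eqref{modcons}. For the tail part, $\tilde{\lambda}_i \le \mu_{\max} \le \sigma^2 t (1+\sqrt{v/t})^2$, and the hypothesis $\sigma^2 < \delta\lambda_r/t$ gives $\mu_{\max} < \delta\lambda_r (1+\sqrt{v/t})^2$; in the asymptotic regime $v/t \ll 1$ the factor $(1+\sqrt{v/t})^2 \to 1$, so $\tilde{\lambda}_i < \delta\lambda_r$ as required (one may need to absorb the $(1+\sqrt{v/t})^2$ slack by a slightly smaller constant, or state the bound as holding in the limit). For the leading part, $|\tilde{\lambda}_i - \lambda_i| \le \max(|\mu_{\min}|, |\mu_{\max}|) = \mu_{\max}$ (since $\mathbf{S}\mathbf{S}^T \succeq 0$ forces $\mu_{\min} \ge 0$), and $\mu_{\max} < \delta\lambda_r \le \delta\lambda_i$ because $\lambda_r$ is the smallest of the nonzero $\lambda_i$'s. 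That is exactly $|\tilde{\lambda}_i - \lambda_i| < \delta\lambda_i$ for $i=1,\ldots,r$. So both halves of \eqref{modcons} follow once the operator norm of $\mathbf{S}\mathbf{S}^T$ is controlled by $\sigma^2 t$ up to the vanishing correction.

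The main obstacle, and the step that deserves the most care, is making the control of $\|\mathbf{S}\mathbf{S}^T\|$ rigorous and uniform: the Marchenko--Pastur / Bai--Yin edge result says the largest eigenvalue of $\mathbf{S}\mathbf{S}^T$ converges \emph{almost surely} to $\sigma^2 t(1+\sqrt{v/t})^2$ as $v,t\to\infty$, so the theorem's conclusion is really an asymptotic (almost-sure, or with-high-probability) statement rather than a deterministic one, and the write-up should either invoke \cite{benaych2011} in a way that packages this edge behavior together with the perturbation bookkeeping, or cite Bai--Yin directly. A secondary subtlety is that Theorem~\ref{thm:eigval} as stated only handles the $\mathbf{S}\mathbf{S}^T$ term of the expansion \eqref{eq:exp} and defers the cross terms $\mathbf{U}\mathbf{W}\mathbf{S}^T + \mathbf{S}\mathbf{W}^T\mathbf{U}^T$ to later in the paper, so I would explicitly note that the argument here bounds only the symmetric part and that the cross terms are treated separately; their operator norm is $O(\sigma\sqrt{t}\,\|\mathbf{W}\|)$ and can be folded into $\delta$ by the same kind of estimate, but that is outside the scope of this particular statement. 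Finally, a cosmetic point: one should track whether the strict inequality $\sigma^2 < \delta\lambda_r/t$ suffices or whether the $(1+\sqrt{v/t})^2$ edge factor forces the hypothesis to be interpreted in the $v/t\to 0$ limit — I would phrase the proof so that the limit is taken first, which is consistent with the ``as $v,t\to\infty$'' hypothesis already in the statement.
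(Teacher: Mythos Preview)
Your argument is correct but follows a genuinely different and more elementary route than the paper. The paper emulates the Benaych-Georges--Nadakuditi analysis in full: it writes down the Marchenko--Pastur limit for the spectrum of $\mathbf{S}\mathbf{S}^T$, computes its Cauchy transform $G_\mu$, inverts it, and obtains explicit asymptotic formulas for the perturbed eigenvalues, namely $\tilde\lambda_i \approx \lambda_i + \sigma^2 t + \gamma\sigma^4 t^2/\lambda_i$ for $i\le r$ (above the phase transition $\lambda_i>\gamma\sigma^2 t$) and $\tilde\lambda_i \approx \sigma^2 t(1-2\sqrt\gamma)$ for $i>r$; it then plugs these into \eqref{modcons} and solves for the condition $\sigma^2<\delta\lambda_r/t$. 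You instead bypass the Cauchy-transform machinery entirely: Weyl's interlacing inequalities give $|\tilde\lambda_i-\lambda_i|\le\mu_{\max}$ and $\tilde\lambda_i\le\mu_{\max}$ for $i>r$ directly, and then only the Bai--Yin edge bound $\mu_{\max}\to\sigma^2 t(1+\sqrt{v/t})^2$ is needed to close the argument under $\sigma^2 t<\delta\lambda_r$ and $v/t\to 0$.

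Both approaches are valid for the stated conclusion. The paper's route buys more: it yields the precise second-order shift $\gamma\sigma^4 t^2/\lambda_i$ and exposes the BBP-type phase transition at $\lambda_i=\gamma\sigma^2 t$, information that is informative even though it is not strictly required for \eqref{modcons}. Your route buys simplicity and robustness --- Weyl plus an operator-norm bound is a one-line deterministic reduction, and the only probabilistic input is the edge convergence, which you correctly flag as an almost-sure (or high-probability) statement that should be made explicit. Your remarks on the cross terms and on absorbing the $(1+\sqrt{v/t})^2$ factor into the $v/t\to 0$ limit match how the paper handles those issues as well.
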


\begin{proof}
The first half of the proof emulates Theorem $2.1$ from \cite{benaych2011}. 
Consider the matrix ${\bf X} = \sqrt{t}{\bf S}$. 
By the structure of ${\bf S}$, each entry of ${\bf X}$ is \iid Gaussian with zero--mean and variance $\sigma^2t$. 
Let ${\bf Y} = \frac{1}{t}{\bf XX}^T$ and denote its ordered eigenvalues as $\gamma_{i}, i=1,\dots,v$ (large to small). 
Consider the random spectral measure 
\begin{center}
  $\mu_{v}(A) = \frac{1}{v}\#\{\gamma_{i} \in A\}$, \hspace{5mm} $A \subset \mathbb{R}$
\end{center}
The Marchenko--Pastur law \cite{marvcenko} states that as $v,t \to \infty$ such that $\frac{v}{t} \leq 1$, the random measure $\mu_{v} \to \mu$, where $d\mu$ is given by
\begin{center}
  $d\mu(a) = \frac{1}{2\pi\sigma^2t\gamma a} \sqrt{(\gamma_{+}-a)(a-\gamma_{-})} 
    {\bf 1}_{[\gamma_{-},\gamma_{+}]}da$
\end{center}
where $\gamma = \frac{v}{t}$. 
Here ${\bf 1}_{[\gamma_{-},\gamma_{+}]}$ is an indicator function that is non--zero on $[\gamma_{-},\gamma_{+}]$. 
$\gamma_{\pm} = \sigma^2t (1 \pm \sqrt{\gamma})^2$ are the extreme points of the support of $\mu$. 
It is well known that the extreme eigenvalues converge almost surely to $\gamma_{\pm}$ \cite{Edelman}. 
Since $v,t \to \infty$ and $\gamma = \frac{v}{t} \ll 1$, the length of $[\gamma_{-},\gamma_{+}]$ is much smaller than the values in it. 
Hence we have,
\begin{center}
  $\gamma_{\pm} \sim \sigma^2t (1 \pm 2\sqrt{\gamma})$ \hspace{3mm};\hspace{3mm} 
    $\sqrt{(\gamma_{+}-a)(a-\gamma_{-})} \ll a$
\end{center}
and the new $d\mu(a)$ is given by
\begin{align*}
  d\mu(a) &= \frac{\sqrt{(\sigma^2t(1+2\sqrt{\gamma})-a)(a-\sigma^2t(1-2\sqrt{\gamma}))}}{2\pi\gamma\sigma^4t^2} {\bf 1}_{[\sigma^2t(1-2\sqrt{\gamma}),\sigma^2t(1+2\sqrt{\gamma})]} da \\
  &= \frac{1}{2\pi\gamma\sigma^4t^2} \sqrt{4\gamma\sigma^4t^2 - (a - \sigma^2t)^2} {\bf 1}_{[\sigma^2t(1-2\sqrt{\gamma}),\sigma^2t(1+2\sqrt{\gamma})]} da
\end{align*}
The form we have derived for $d\mu(a)$ shares some similarities with
$d\mu_{X}(x)$ in Section $3.1$ of \cite{benaych2011}. 
The analysis in \cite{benaych2011} takes into account the phase transition of extreme eigen values. 
This is done by imitating a time--frequency type analysis on compact support of extreme spectral measure i.e. using Cauchy transform. 
For our case, the Cauchy transform of $\mu(a)$ is
\begin{align*}
  G_{\mu}(z) & = \frac{1}{2\gamma\sigma^4t^2} \left(z - \sigma^2t - sgn(z)\sqrt{(z - \sigma^2t)^2 - 4\gamma\sigma^4t^2} \right) \\
  & \hspace{10mm} \text{for} \hspace{3mm} z \in (\infty,\sigma^2t(1-2\sqrt{\gamma}))\cup(\sigma^2t(1+2\sqrt{\gamma}),\infty)
\end{align*}
Since we are interested in the asymptotic eigen values (and $\gamma \ll 1$), $G_{\mu}(\gamma_{\pm})$ and the functional inverse $G^{-1}_{\mu}(\theta)$ are
\begin{center}
$G_{\mu}(\gamma_{+}) = \frac{1}{\sigma^2t\sqrt(\gamma)} \hspace{2mm};\hspace{2mm} G_{\mu}(\gamma_{-}) = -\frac{1}{\sigma^2t\sqrt(\gamma)} \hspace{2mm};\hspace{2mm} G^{-1}_{\mu}(\theta) = \sigma^2t + \frac{1}{\theta} + \gamma\sigma^4t^2\theta$
\end{center}
Hence, the asymptotic behavior of the eigen values of perturbed matrix ${\bf Q} + {\bf SS}^T$ is (observing that ${\bf SS}^T = {\bf Y}$ and ${\bf Q}$ has $r$ non--zero positive eigen values)
\begin{align*}\tag{$\ast$}\label{purteig}
\tilde{\lambda_{i}} (i=1,\ldots,r) \hspace{2mm} &\approx \hspace{2mm} \begin{cases} \lambda_{i} + \sigma^2t + \frac{\gamma\sigma^4t^2}{\lambda_{i}} & \text{for} \hspace{3mm} \lambda_{i} > \gamma\sigma^2t \\ \gamma\sigma^2t & \text{else} \end{cases} \\
\tilde{\lambda_{i}} (i=r+1,\ldots,v) \hspace{2mm} &\approx \hspace{2mm} \sigma^2t(1-2\sqrt(\gamma))
\end{align*}
With $\tilde{\lambda_{i}}, i=1,\ldots,v $ in hand, we now bound the unknown variance $\sigma^2$ such that $(\star)$ is satisfied. 
We only have two cases to consider,
\begin{center}
(1) \hspace{1mm} $\lambda_{i} > \gamma\sigma^2t$ , $i=1,\ldots,r$ \hspace{6mm}
(2) \hspace{1mm} $\lambda_{i} \leq \gamma\sigma^2t$ , $i=k,\ldots,r$ (for some $k \geq 1$)
\end{center}
We constrain the unknown $\sigma^2$ such that case (2) does not arise. 
Substituting for $\tilde{\lambda_{i}}$`s from $(\ast)$ in $(\star)$, we get,
\begin{center}
$\sigma^2t + \frac{\gamma\sigma^4t^2}{\lambda_{i}} < \delta\lambda_{i}$ \hspace{3mm};\hspace{3mm} $\lambda_{i} > \gamma\sigma^2t$ \hspace{3mm};\hspace{3mm} $\sigma^2t(1-2\sqrt(\gamma)) < \delta\lambda_{r}$
\end{center}
These inequalities will hold when $\sigma^2 < \frac{\delta\lambda_{r}}{t}$ (since $\gamma = \ll 1$, $\delta < 1$ and $\lambda_{1}\geq\lambda_{2}\geq,\ldots,\lambda_{r}$). 
\end{proof}

Note that the missing cross-terms would not change the result of Theorem
$\ref{thm:eigval}$ drastically, because ${\bf UW}$ has $r$ non-zero singular
values and hence ${\bf UW S}^T$ is a low-rank projection of a low-variance 
random matrix, and this will clearly be dominated by either of the other terms.
Having justified the model in \eqref{eq:def}, the following thorem 
shows that the empirical distribution of the maximum Null statistic approximates
the true distribution.

\vspace{10pt}
\begin{theorem}
Let $m_t = \max_i \PB_{i,t}$ be the maximum observed test statistic at
permutation trial $t$, and similarly let $\hat{m}_t = \max_i \hat{\PB}_{i,t}$ be
the maximum reconstructed test statistic.
Further, let the maximum reconstruction error be $\epsilon$, such that
$|\PB_{i,t} - \hat{\PB}_{i,t}| \le \epsilon$. Then, for any real number $k>0$, we have,
\[
    \mbox{Pr}\left[ m_t - \hat{m}_t - (b - \hat{b}) > k\epsilon \right] <
\frac{1}{k^2}
\]
where $b$ is the bias term described in Section 2, and $\hat{b}$ is its estimate
from the training phase.
\end{theorem}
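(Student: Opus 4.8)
The plan is to read the claimed inequality as a Chebyshev (indeed a one-sided Chebyshev / Cantelli) bound applied to the single random variable $Z_t := m_t - \hat m_t$: once we know that $Z_t$ has mean $b - \hat b$ and variance at most $\epsilon^2$, the bound $\mbox{Pr}[\,Z_t - (b-\hat b) > k\epsilon\,] < 1/k^2$ is immediate. So the real content is (i) a deterministic Lipschitz estimate that converts the coordinate-wise error bound $\epsilon$ into a bound on $\mathrm{Var}(Z_t)$, and (ii) identifying $\mathbb{E}[Z_t]$ with the (true minus estimated) bias correction.

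For (i) I would use that the coordinate-wise maximum $a \mapsto \max_i a_i$ is $1$-Lipschitz with respect to $\|\cdot\|_\infty$: for any $a, c \in \R^v$, $|\max_i a_i - \max_i c_i| \le \max_i |a_i - c_i|$. Taking $a = \PB_{\cdot,t}$ and $c = \hat\PB_{\cdot,t}$ and invoking the hypothesis $|\PB_{i,t} - \hat\PB_{i,t}| \le \epsilon$ for every $i$ yields the pathwise inequality $|Z_t| = |m_t - \hat m_t| \le \epsilon$. Consequently $\mathbb{E}[Z_t^2] \le \epsilon^2$, and since subtracting the mean can only decrease the second moment, $\mathrm{Var}(Z_t) = \mathbb{E}[Z_t^2] - (\mathbb{E}[Z_t])^2 \le \epsilon^2$.

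For (ii), recall from Section 2 that fitting the reconstruction coefficients on a very sparse sample underestimates the variance of the residual ${\bf S}$, which shifts the raw reconstructed sample-maximum distribution toward $0$; $b$ is precisely the magnitude of this shift, i.e. $\mathbb{E}[m_t - \hat m_t^{\mathrm{raw}}] = b$, and the algorithm adds back the training-phase estimate $\hat b$, so $\hat m_t = \hat m_t^{\mathrm{raw}} + \hat b$ and therefore $\mathbb{E}[Z_t] = b - \hat b$. Combining (i) and (ii) with the one-sided Chebyshev inequality gives
\[
  \mbox{Pr}\!\left[\, Z_t - (b-\hat b) > k\epsilon \,\right] \;\le\; \frac{\mathrm{Var}(Z_t)}{\mathrm{Var}(Z_t) + k^2\epsilon^2} \;\le\; \frac{1}{1+k^2} \;<\; \frac{1}{k^2},
\]
which is the stated bound; the plain two-sided estimate $\mathrm{Var}(Z_t)/(k\epsilon)^2 \le 1/k^2$ would also suffice, up to the non-strict version of the inequality.

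The step I expect to require the most care — the main obstacle — is (ii): making precise exactly which quantity ``the bias term described in Section 2'' denotes, and justifying that $\hat b$ enters as an (at least consistent, ideally unbiased) estimate of the true shift $b$, so that the mean of $Z_t$ is honestly $b - \hat b$ rather than merely approximately so. A secondary, more technical wrinkle concerns the status of $\epsilon$: if it is only a high-probability envelope on the reconstruction error — obtained, say, by combining a matrix-completion recovery guarantee for ${\bf UW}$ with a Gaussian tail bound for ${\bf S} - \hat{\bf S}$ — rather than an almost-sure bound, the argument should be carried out conditionally on the event $\{\max_i |\PB_{i,t} - \hat\PB_{i,t}| \le \epsilon\}$, with the small failure probability folded into the right-hand side; interpreting $\epsilon$ as the realized worst-case error over the recovered matrix sidesteps this.
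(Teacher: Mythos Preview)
Your proposal is correct and follows essentially the same approach as the paper: identify $\mathbb{E}[m_t-\hat m_t]=b-\hat b$ from the definition of the bias correction, bound the deviation $|m_t-\hat m_t|$ via the $\ell_\infty$-Lipschitz property of the coordinatewise maximum (the paper does this by explicitly swapping the argmax indices, arriving at the same conclusion), deduce $\mathrm{Var}(m_t-\hat m_t)\le\epsilon^2$, and apply Chebyshev. Your use of the one-sided (Cantelli) form to secure the strict inequality, and your remark that the theorem statement's $\hat m_t=\max_i\hat\PB_{i,t}$ must implicitly include the additive correction $\hat b$ for the mean computation to yield $b-\hat b$, are both improvements in precision over the paper's own write-up.
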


\begin{proof}
Recall that there is a bias term in estimating the distribution of the maximum
which must be corrected for
this is because $\mbox{var}(\hat{S})$ underestimates $\mbox{var}(S)$ due to the bias/variance tradeoff. 
Let $b$
be this difference:
\[
  b = {\mathbb E}_t \left[ \max_i \PB_{i,t} \right] -
  {\mathbb E}_t \left[ \max_i \hat{\PB}_{i,t} \right].
\]
Further, recall that we estimate $b$ by taking the difference of mean sample
maxima between observed and reconstructed test statistics
over the training set, giving $\hat{b}$, which is an unbiased estimator of
$b$ ---  it is unbiased because a difference in sample means is an unbiased
estimator of the difference of two expectations.

Let $\delta_t = m_t - \hat{m}_t$. To show the result we must derive a
concentration bound on $\delta_t$,
which we will do by applying Chebyshev's inequality. In order to do so, we require an
expression for the mean and variance of $\delta_t$. First, we derive an
expression for the mean. Taking the expectation over $t$ of $m_t - \hat{m}_t$ we
have,
\begin{align*}
    {\mathbb E}_t \left[ m_t - \hat{m}_t \right] & =
    {\mathbb E}_t \left[ \max_i \PB_{i,t} - \max_i\hat{\PB}_{i,t} - \hat{b} \right] \\
        & = {\mathbb E}_t \left[ \max_i \PB_{i,t} \right] - 
        {\mathbb E}_t \left[ \max_i \hat{\PB}_{i,t} \right] - \hat{b} \\
        & = b - \hat{b}
\end{align*}
where the second equality follows from the linearity of expectation. 

Next, we require an expression for the variance of $\delta_t$.
Let $i$ be the index at which the maximum observed test statistic occurs
for permutation trial $t$, and likewise let $j$ be the index at which the
maximum reconstructed test statistic occurs. Thus we have,
\begin{align*}
    \PB_{i,t} \le& ~ \hat{\PB}_{i,t} + \epsilon ~\le \hat{\PB}_{j,t} + \epsilon\\
    \PB_{i,t} \ge& ~ \PB_{j,t} \qquad \ge \hat{\PB}_{j,t} - \epsilon,\\
\end{align*} 
and so we have that
\[
    |m_t - \hat{m}_t| < 2\epsilon
\]
and so
\[
    \mbox{var}(m_t - \hat{m_t}) \le \epsilon^2.
\]

Applying Chebyshev's bound,
\[
    \mbox{Pr}\left[ m_t - \hat{m}_t - (b - \hat{b}) > k\epsilon \right] < \frac{1}{k^2}
\]
which completes the proof.
\end{proof}

\section{Experimental evaluations}
\label{sect-experiments}

Our experimental evaluations include four separate neuroimaging datasets of
Alzheimer's Disease (AD) patients, cognitively healthy age-matched controls
(CN), and in some cases Mild Cognitive Impairment (MCI) patients. The first of
these is the Alzheimer's Disease Neuroimaging Initiative (ADNI) dataset, a
nation-wide multi-site study. ADNI is a landmark study sponsored by the NIH,
major pharmaceuticals and others to determine the extent to which multimodal
brain imaging can help predict onset, and monitor progression of AD. The others
were collected as part of other studies of AD and MCI.
We refer to these datasets as Dataset A---D. 
Their demographic characteristics are as follows:
\begin{inparaenum}[{Dataset} A: ]
\item $40$ subjects, AD vs. CN, median age : $76$;
\item $50$ subjects, AD vs. CN, median age : $68$;
\item $55$ subjects, CN vs. MCI, median age : $65.16$;
\item $70$ subjects, CN vs. MCI, median age : $66.24$.
\end{inparaenum}

Our evaluations focus on three main questions: 
\begin{inparaenum}[\bfseries (i)]
\item Can we recover an acceptable approximation of the maximum statistic Null distribution from an approximation of the permutation test matrix?
\item What degree of computational speedup can we expect at various subsampling rates, and how does this affect the trade-off 
  with approximation error?
\item How sensitive is the estimated $\alpha$-level threshold with respect to the recovered Null distribution?
\end{inparaenum}
In all our experiments, the rank estimate for subspace tracking (to construct the low--rank basis ${\bf U}$) was taken as the number of subjects.

\subsection{Can we recover the Maximum Null?}
\label{sec:null}
Our experiments suggest that our model can recover the maximum Null. 
We use Kullback--Leibler (KL) divergence and Bhattacharya Distance (BD) to compare the estimated maximum Null from our model to the true one. 
We also construct a ``Naive--Null'', where the subsampled statistics are pooled and the Null distribution is constructed with no further processing (i.e., completion).
Using this as a baseline, Fig. \ref{fig:null} shows the KL and BD values obtained from three datasets, at $20$ different 
sub-sampling rates (ranging from $0.1\%$ to $10\%$).
Note that our model involves a training module where the approximate `bias' of residuals is estimated. 
This estimation is prone to 
noise (for example, number of training frames).
Hence Fig. \ref{fig:null} also shows the error bars pertaining to $5$ realizations on the $20$ sampling rates.
The first observation from Fig. \ref{fig:null} is that both KL and BD measures of the recovered Null to the true 
distribution are $< e^{-5}$ for sampling rates more than $0.4\%$.
\begin{figure}[!hb]\centering
\begin{tabular}{ccc}
\subfloat[Dataset A]{\includegraphics[width=0.3\linewidth]{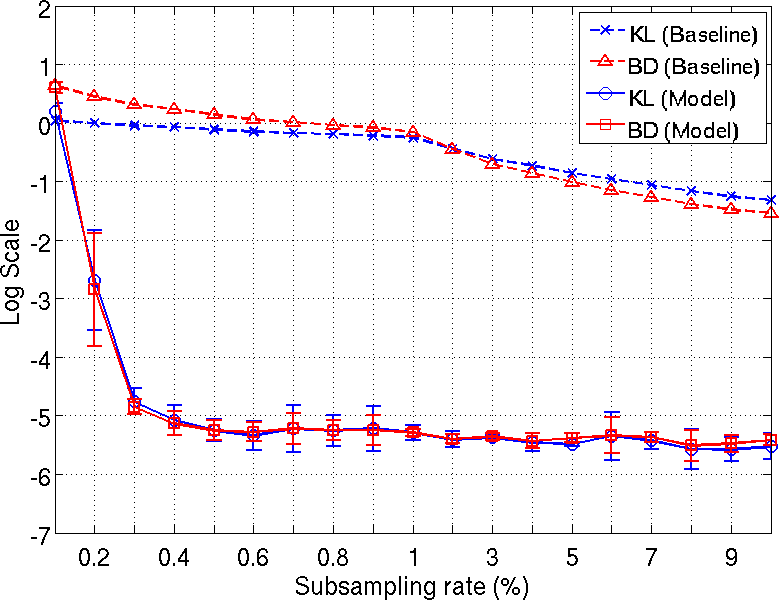}} &
\subfloat[Dataset B]{\includegraphics[width=0.3\linewidth]{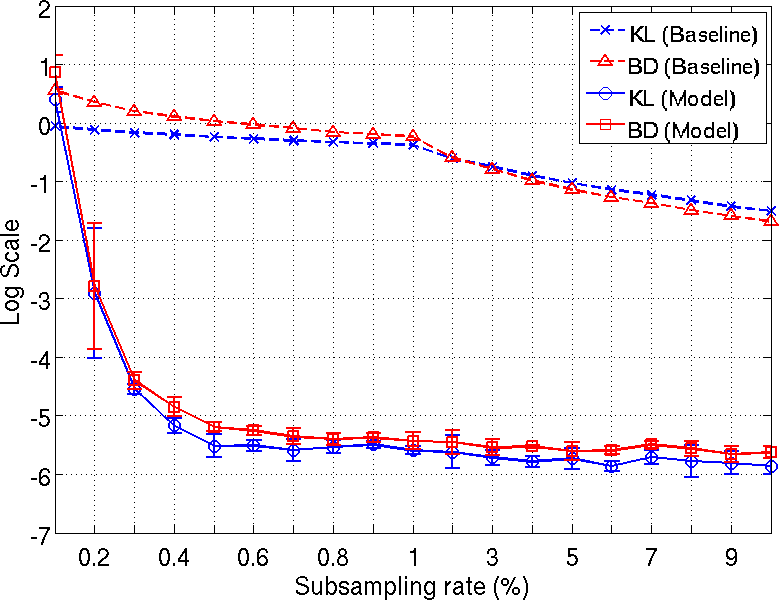}} &
\subfloat[Dataset C]{\includegraphics[width=0.3\linewidth]{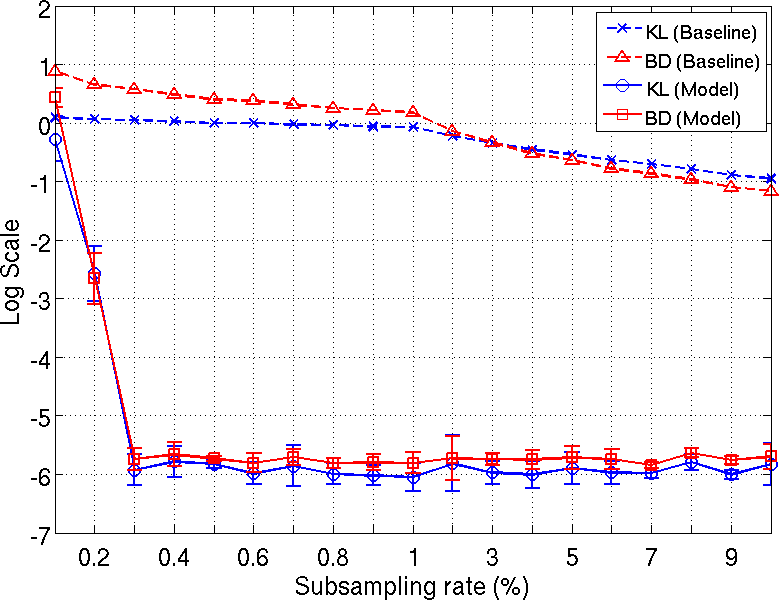}} 
\end{tabular}
\caption{\footnotesize \label{fig:null} KL (blue) and BD (red) measures between the true max Null distribution (given by the full matrix ${\bf P}$) and that recovered by our method (thick lines), along with the baseline naive subsampling method (dotted lines). Results for Datasets A, B, C are shown here. Plot for Dataset D is in the extended version of the paper. }
\end{figure}
This suggests that our model recovers both the shape (low BD) and position (low KL) of the null to high accuracy 
at extremely low sub-sampling.
We also see that above a certain minimum subsampling rate ($\sim 0.3\%$), the KL and BD do not change drastically as the rate is increased.
This is expected from the theory on matrix completion where after observing a minimum number of data samples, adding in new 
samples does not substantially increase information content.
Further, the error bars (although very small in magnitude) of both KL and BD show that the recovery is noisy. 
We believe this is due to the approximate estimate of bias from training module.
\begin{figure}[!h]\centering
\includegraphics[width=0.6\linewidth]{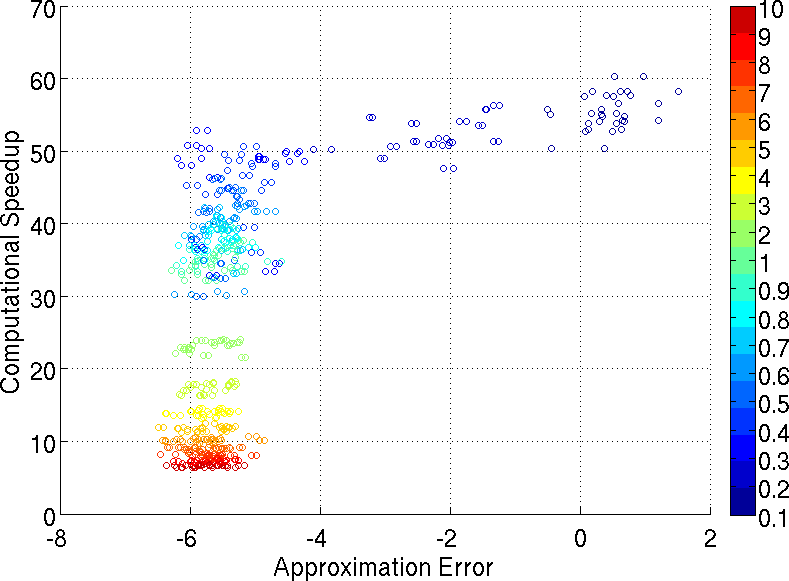}
\caption{\footnotesize \label{fig:scatter} Scatter plot of computational speedup vs. KL. The plot corresponds to the $20$ different samplings on all $4$ datasets (for $5$ repeated set of experiments) and the colormap is from $0.1\%$ to $10\%$ sampling rate. The x--axis is in log scale.}
\end{figure}

\vspace{-5pt}
\subsection{What is the computational speedup?}
\label{sec:comp}
Our experiments suggest that the speedup is substantial.
Figs. \ref{fig:comp} and \ref{fig:scatter} compare the time taken to perform the complete permutation testing to that of our model. 
The three plots in Fig. \ref{fig:comp} correspond to the datasets used in Fig. \ref{fig:null}, in that order.
Each plot contains $4$ curves and represent the time taken by our model, the corresponding sampling and GRASTA \cite{he2012} 
recovery (plus training) times and the total time to construct the entire matrix ${\bf P}$ (horizontal line).
And Fig. \ref{fig:scatter} shows the scatter plot of computational speedup vs. KL divergence (over $3$ repeated set of experiments on all the datasets and sampling rates).
Our model achieved at least $30$ times decrease in computation time in the low sampling regime ($< 1\%$). 
Around $0.5\%-0.6\%$ sub-sampling (where the KL and BD are already $< e^{-5}$),
the computation speed-up factor averaged over all datasets was $45 \times$.
This shows that our model achieved good accuracy (low KL and BD) together with high computational speed up in tandem, especially, for $0.4\%-0.7\%$ sampling rates. 
However note from Fig. \ref{fig:scatter} that there is a trade--off between the speedup factor and approximation error (KL or BD).
Overall the highest computational speedup factor achieved at a recovery level of $e^{-5}$ on KL and BD is around $50$x (and this occured around $0.4\%-0.5\%$ sampling rate, refer to Fig. \ref{fig:scatter}).
It was observed that a speedup factor of upto $55\times$ was obtained for
Datasets C and D at $0.3\%$ subsampling, where the KL and BD were as low as
$e^{-5.5}$ (refer to Fig. \ref{fig:null} and the extended version of the paper).
\begin{figure}[!tb]\centering
{\footnotesize
\begin{tabular}{ccc}
\subfloat[Speedup (at $0.4\%$) is $45.1$]{\includegraphics[width=0.3\linewidth]{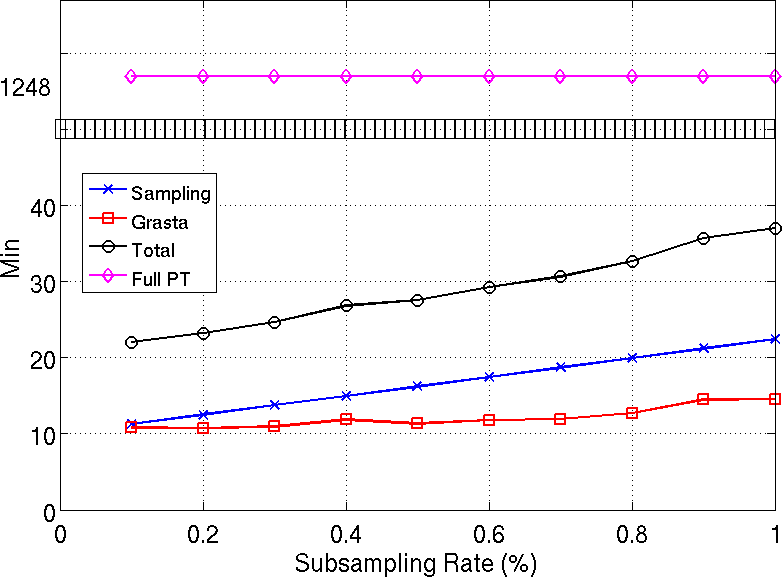}} &
\subfloat[Speedup (at $0.4\%$) is $45.6$]{\includegraphics[width=0.3\linewidth]{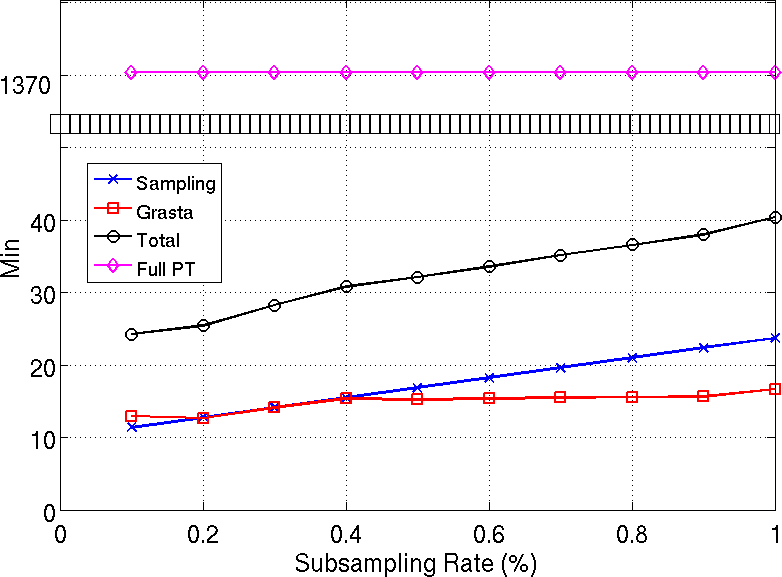}} &
\subfloat[Speedup (at $0.4\%$) is $48.5$]{\includegraphics[width=0.3\linewidth]{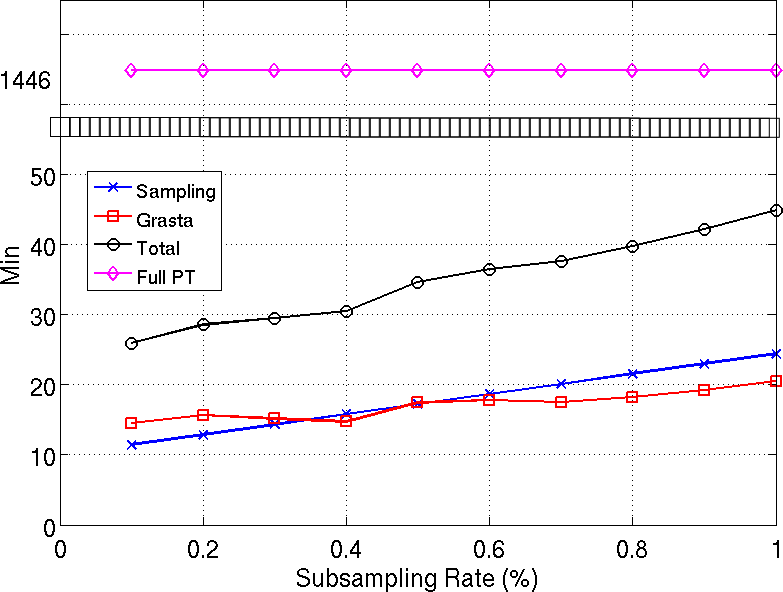}} 
\end{tabular}
}
\vspace{-3pt}
\caption{\footnotesize \label{fig:comp} Computation time (in minutes) of our model compared to that of computing the entire matrix ${\bf P}$. Results are for the same three datasets as in Fig. \ref{fig:null}. Please find the plot for Dataset D in the extended version of the paper. The horizontal line (magenta) shows the time taken for computing the full matrix ${\bf P}$. The other three curves include : subsampling (blue), GRASTA recovery (red) and total time taken by our model (black). Plots correspond to the low sampling regime ($<1\%$) and note the jump in y--axis (black boxes). For reference, the speedup factor at $0.4\%$ sampling rate is reported at the bottom of each plot.}
\vspace{-16pt}
\end{figure}

\subsection{How stable is the estimated $\alpha$-threshold (clinical significance)?}
\label{sec:stable}
Our experiments suggest that the threshold is stable. 
Fig. \ref{fig:alphas} and Table \ref{tab:alphas} summarize the clinical significance of our model.
Fig. \ref{fig:alphas} show the error in estimating the true max threshold, at $1-\alpha=0.95$ level of confidence.
The x--axis corresponds to the $20$ different sampling rates used and y--axis shows the absolute difference of thresholds in log scale.
\begin{figure}[!tb]\centering
\begin{tabular}{ccc}
\subfloat[Datasets A, B]{\includegraphics[width=0.325\linewidth]{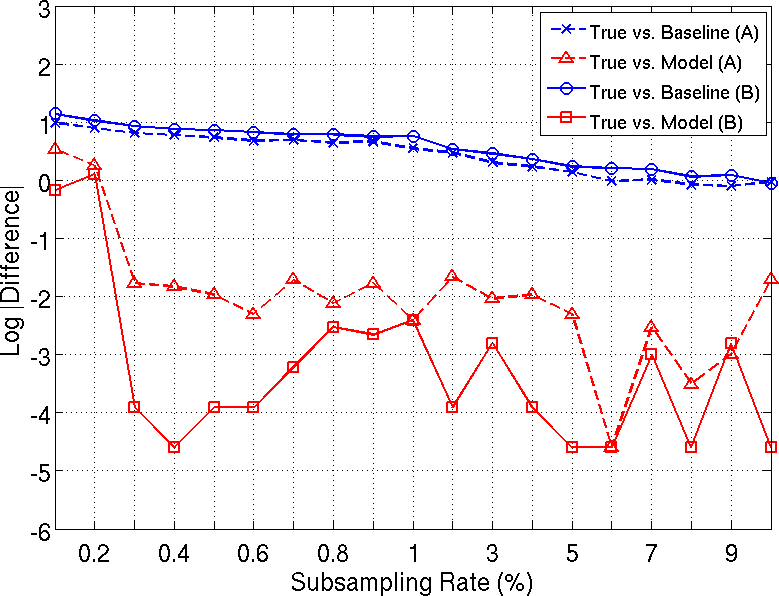}} & \quad \quad
\subfloat[Datasets C, D]{\includegraphics[width=0.325\linewidth]{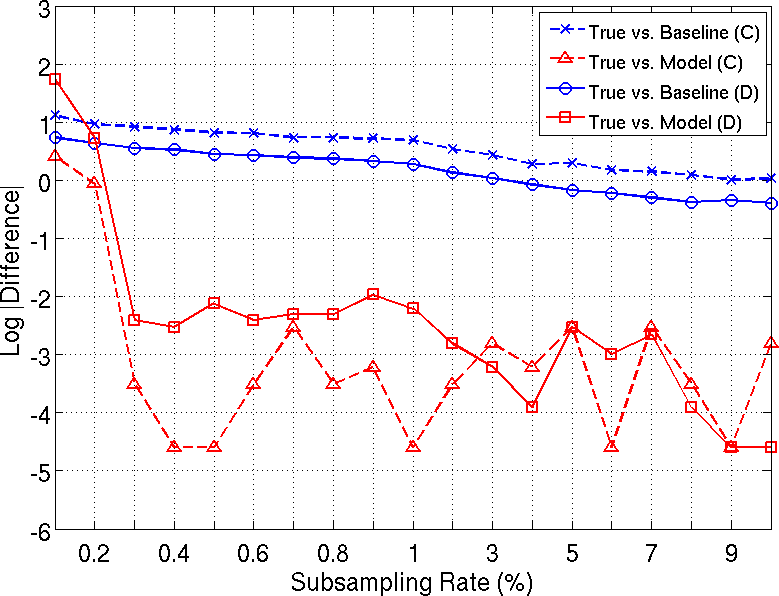}}
\end{tabular}
\caption{\footnotesize \label{fig:alphas} Error of estimated $t$ statistic thresholds (red) for the $20$ different subsampling rates on the four Datasets. The confidence level is $1-\alpha=0.95$. The y-axis is in log--scale. For reference, the thresholds given by baseline model (blue) are included. Note that each plot corresponds to two datasets.}
\end{figure}
\begin{table}
  \centering
  {\footnotesize
  \begin{tabular}[c]{|c|c|c|c|c|c|}
   \hline \cellcolor[gray]{0.85}{Data} & \cellcolor[gray]{0.85}{Sampling} & \multicolumn{4}{>{\columncolor[gray]{0.85}}c|}{$1-\alpha$ level} \\
  \cellcolor[gray]{0.85}{name} & \cellcolor[gray]{0.85}{rate} & \cellcolor[gray]{0.85}{$0.95$} & \cellcolor[gray]{0.85}{$0.99$} & \cellcolor[gray]{0.85}{$0.995$} & \cellcolor[gray]{0.85}{$0.999$} \\ \hline
   \multirow{2}{*}{$A$} & $0.3\%$ & $0.16$ & $0.11$ & $0.14$ & $0.07$ \\ 
   & $0.5\%$ & $0.13$ & $0.08$ & $0.10$ & $0.03$ \\ \hline
   \multirow{2}{*}{$B$} & $0.3\%$ & $0.02$ & $0.05$ & $0.03$ & $0.13$ \\ 
   & $0.5\%$ & $0.02$ & $0.07$ & $0.08$ & $0.04$ \\ \hline
   \multirow{2}{*}{$C$} & $0.3\%$ & $0.04$ & $0.13$ & $0.21$ & $0.20$ \\ 
   & $0.5\%$ & $0.01$ & $0.07$ & $0.07$ & $0.05$ \\ \hline
   \multirow{2}{*}{$D$} & $0.3\%$ & $0.08$ & $0.10$ & $0.27$ & $0.31$ \\
   & $0.5\%$ & $0.12$ & $0.13$ & $0.25$ & $0.22$ \\ \hline
  \end{tabular}}
  \caption{\footnotesize \label{tab:alphas} Errors of estimated $t$
statistic thresholds on all datasets at two different subsampling rates.}
\end{table}
Observe that for sampling rates higher than $3\%$, the mean and maximum differences was $0.04$ and $0.18$.
Note that the binning resolution of max.statistic used for constructing the Null was $0.01$.
These results show that not only the global shape of the maximum Null distribution is estimated to high accuracy (see Section \ref{sec:null}) but also the shape and area in the tail.
To support this observation, we show the absolute differences of the estimated thresholds on all the datasets at $4$ different $\alpha$ levels in Table \ref{tab:alphas}.
The errors for $1-\alpha=0.95,0.99$ are at most $0.16$. The increase in error for $1-\alpha>0.995$ is a sampling artifact and is expected.
Note that in a few cases, the error at $0.5\%$ is slightly higher than that at $0.3\%$ suggesting that the recovery is noisy (see Sec. \ref{sec:null} and the errorbars of Fig. \ref{fig:null}). 
Overall the estimated $\alpha$-thresholds are both faithful and stable.

\section{Conclusions and future directions}
\label{sect-conclusion}

In this paper, we have proposed a novel method of efficiently approximating the
permutation testing matrix by first estimating the major singular vectors, then filling
in the missing values via matrix completion, and finally estimating the
distribution of residual values. Experiments on four different neuroimaging
datasets show that we can recover the distribution of the maximum Null statistic
to a high degree of accuracy, while maintaining a computational speedup factor of
roughly $50\times$.
While our focus has been on neuroimaging problems, we note that multiple testing
and False Discovery Rate (FDR) correction are important issues in genomic and
RNA analyses, and our contribution may offer enhanced leverage to existing
methodologies which use permutation testing in these settings\cite{storey}.

\vspace{3mm}
{\small {\bf Acknowledgments:} We thank Robert Nowak, Grace Wahba, Moo K. Chung and the anonymous reviewers for their helpful comments, and Jia Xu for helping with a preliminary implementation of the model.
This work was supported in part by NIH R01 AG040396; NSF CAREER grant 1252725; NSF RI 1116584; 
Wisconsin Partnership Fund; UW ADRC P50 AG033514; UW ICTR 1UL1RR025011 and a Veterans Administration Merit Review Grant I01CX000165. 
Hinrichs is supported by a CIBM post-doctoral fellowship via NLM grant 2T15LM007359. 
The contents do not represent views of the Dept. of Veterans Affairs or the United States Government. }

\newpage

\section*{Fig 1. : All $4$ datasets}

\begin{figure}[!htb]\centering
\begin{tabular}{ccc}
\subfloat[Dataset A]{\includegraphics[width=0.45\linewidth]{ADNI_KLBD}} &
\subfloat[Dataset B]{\includegraphics[width=0.45\linewidth]{ADRC_KLBD}} \\
\subfloat[Dataset C]{\includegraphics[width=0.45\linewidth]{PREDICT_KLBD}} &
\subfloat[Dataset D]{\includegraphics[width=0.45\linewidth]{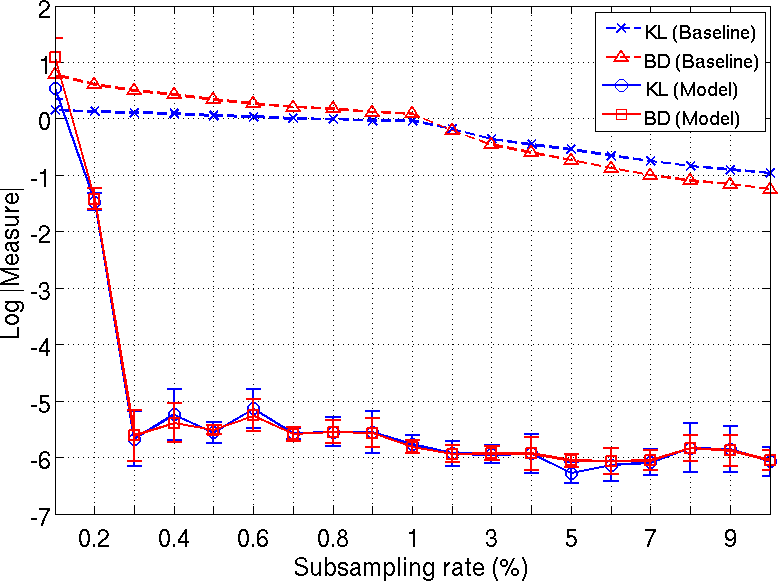}} 
\end{tabular}
\caption{\footnotesize \label{fig:null} KL (blue) and BD (red) measures between the true max. null distribution (given by the full matrix ${\bf P}$) and that recovered by our method (thick lines), along with the baseline naive subsampling method (dotted lines). Each plot corresponds to one of the four datasets used in our evaluations. Note that the y--axis is in log scale.}
\end{figure}

\section*{Fig 2. : All $4$ datasets}

\begin{figure}[!htb]\centering
\begin{tabular}{ccc}
\subfloat[Speedup (at $0.4\%$) - $45.1$]{\includegraphics[width=0.45\linewidth]{ADNI_time}} &
\subfloat[Speedup (at $0.4\%$) - $45.6$]{\includegraphics[width=0.45\linewidth]{ADRC_time}} \\
\subfloat[Speedup (at $0.4\%$) - $48.5$]{\includegraphics[width=0.45\linewidth]{PREDICT_time}} & 
\subfloat[Speedup (at $0.4\%$) - $56.4$]{\includegraphics[width=0.45\linewidth]{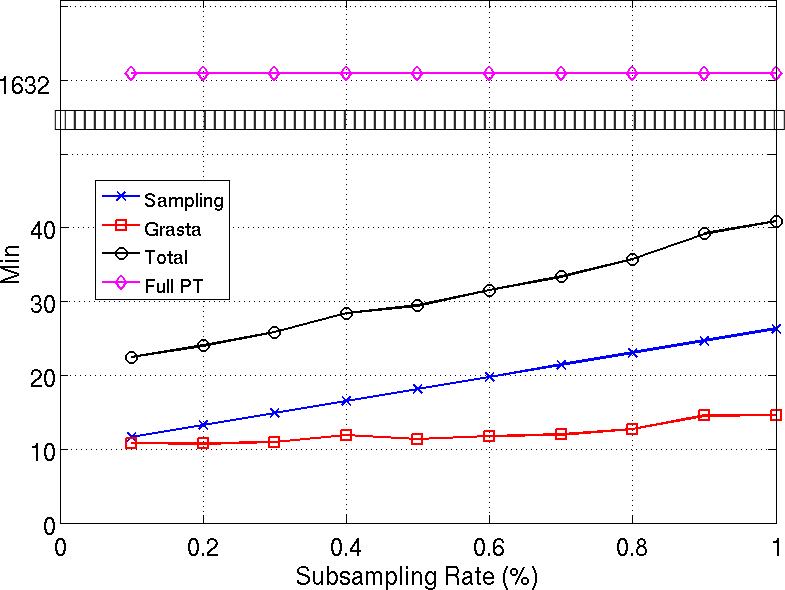}} 
\end{tabular}
\caption{\footnotesize \label{fig:comp} Computation time of our model compared
to that of computing the entire matrix ${\bf P}$. Each plot corresponds to one
of the four datasets A, B, C and D (in that order). The horizontal line
(magenta) shows the time taken for computing the full matrix ${\bf P}$. The
other three curves include : subsampling (blue), GRASTA recovery (red) and total
time taken by our model (black). Plots correspond to the low sampling regime
($<1\%$) and note the jump in y--axis (black boxes).
For reference the speedup at $0.4\%$ sampling rate is reported at the
bottom of each plot.}
\end{figure}

{
    \footnotesize
    \bibliographystyle{unsrt}
    \bibliography{LRMC_perm_testing}
}

\end{document}